\newtheorem{lemma}{Lemma}
\newtheorem*{lemma*}{Lemma}
\newtheorem{theorem}{Theorem}
\newtheorem*{theorem*}{Theorem}
\theoremstyle{definition}
\newtheorem{definition}{Definition}
\newtheorem*{definition*}{Definition}
\theoremstyle{definition}
\theoremstyle{definition}
\newtheorem{corollary}{Corollary}
\newtheorem*{corollary*}{Corollary}
\theoremstyle{definition}
\newtheorem*{claim*}{Claim}
\newtheorem{conjecture}{Conjecture}
\newcommand{\calA}{\mathcal{A}}
\newcommand{\calD}{\mathcal{D}}
\newcommand{\calM}{\mathcal{M}}
\newcommand{\calS}{\mathcal{S}}
\newcommand{\eps}{\epsilon}
\title{Shuffle Gaussian Mechanism for  Differential Privacy}
\author{Seng Pei Liew\thanks{LINE Corporation; \texttt{\{sengpei.liew,tsubasa.takahashi\}@linecorp.com}}
\And
Tsubasa Takahashi
}
\date{}
\begin{document}

\maketitle

\begin{abstract}
We study Gaussian mechanism in the shuffle model of differential privacy (DP).
Particularly, we characterize the mechanism's R{\'e}nyi differential privacy (RDP), showing that it is of the form:
$$
\eps(\lambda) \leq \frac{1}{\lambda-1}\log\left(\frac{e^{-\lambda/2\sigma^2}}{n^\lambda}\sum_{\substack{k_1+\dotsc+k_n=\lambda;\\k_1,\dotsc,k_n\geq 0}}\binom{\lambda}{k_1,\dotsc,k_n}e^{\sum_{i=1}^nk_i^2/2\sigma^2}\right)
$$
We further prove that the RDP is strictly upper-bounded by the Gaussian RDP without shuffling.
The shuffle Gaussian RDP is advantageous in composing multiple DP mechanisms, where we demonstrate its improvement over the state-of-the-art approximate DP composition theorems in privacy guarantees of the shuffle model.
Moreover, we extend our study to the subsampled shuffle mechanism and the recently proposed shuffled check-in mechanism, which are protocols geared towards distributed/federated learning. 
 Finally, an empirical study of these mechanisms is given to demonstrate the efficacy of employing shuffle Gaussian mechanism under the distributed learning framework to guarantee rigorous user privacy.  
 \footnote{The source code of our implementation is available at \url{https://github.com/spliew/shuffgauss}.}
\end{abstract}

{\allowdisplaybreaks

\section{Introduction}\label{sec:introduction}

The shuffle/shuffled model \cite{cheu2019distributed,erlingsson2019amplification} has attracted attention recently as an intermediate model of trust in differential privacy (DP) \cite{dwork2006calibrating,dwork2006our}.
Within this setup, each user sends a locally differentially private (LDP) \cite{kasiviswanathan2011can} report to a trusted shuffler, where the collected reports are anonymized/shuffled, before being forwarded to the untrusted analyzer/server.
When viewed in the central model, the central DP parameter, $\eps$, can be smaller than the local one, $\eps_0$ \cite{erlingsson2019amplification}; a phenomenon also known as privacy amplification.
This yields better utility-privacy trade-offs than the local model of DP in general, without relying on a highly trusted server as in the central model of DP.
The shuffle model can be realized in practice through a Trusted Execution Environment (TEE) \cite{bittau2017prochlo}, mix-nets \cite{chaum1981untraceable}, or peer-to-peer protocols \cite{liew2022network}.
Various aspects of the shuffle model have been investigated in the literature \cite{cheu2019distributed,balle2019privacy,balle2020privacy,cheu2021differential,girgis2021differentially,girgis2021renyia,girgis2021renyib,koskela2021tight,feldman2022hiding,imola2022differentially}.

A prominent use case of the shuffle model is in differentially private distributed learning, or federated learning (FL) \cite{mcmahan2017communication,kairouz2021advances}.
The non-private version of FL proceeds roughly as follows.
The ochestrating server initiates the learning by sending a model to users.
Then, each user calculates the gradient/model update using her own data, and sends the gradient to the server while keeping her data local.

In order to incorporate DP to distributed learning, it is convenient to first consider the probably most popular (and state-of-the-art) approach of learning with DP under the centralized setting, i.e.,  differentially private stochastic gradient descent (DP-SGD) \cite{bassily2014private,song2013stochastic,abadi2016deep}.
Here, the Gaussian mechanism is employed, where noise in the form of Gaussian/normal distribution is applied to a batch of clipped gradients to attain DP model updates.
As learning often requires repeated interaction, privacy composition is vital when working under a pre-determined privacy budget.
In DP-SGD, privacy is conventionally accounted for via the moments accountant \cite{abadi2016deep}, which is essentially an accounting method based on the more general R{\'e}nyi differential privacy (RDP) \cite{mironov2017renyi}, a notion of DP that provides generally much tighter DP composition, leading to a significant saving of privacy budget compared to strong composition theorems \cite{dwork2010boosting,kairouz2015composition}.
The subsampling effect \cite{ullman} is typically leveraged as well to achieve acceptable levels of privacy.
Moreover, the central-DP version of DP-SGD has been adapted for federated learning \cite{mcmahan2017learning}.

However, perhaps due to the difficulty of handling approximate DP mechanisms (particularly in privacy accounting), such as the Gaussian mechanism, the \textit{shuffle Gaussian mechanism}, and its applications in distributed learning have not been explored in depth \cite{koskela2021tight, liew2022shuffled}. 
Distributed learning in the local or shuffle model is often conducted by utilizing variants of the LDP-SGD algorithm \cite{duchi2018minimax,erlingsson2020encode} in the literature, which is arguably more sophisticated than the Gaussian mechanism. \footnote{The LDP-SGD algorithm proceeds roughly as follows. 
The client-side algorithm first clips user gradient, and flips its sign with a certain probability.
Then, a unit vector is sampled randomly to make an inner product with the processed gradient to yied the inner product's sign.
The sign is again flipped with a certain probability, before sending it along with the unit vector to the server.
The server-side algorithm includes normalizing the aggregated reports, and applying a $l_2$ projection when updating the model.
See Algorithms 4 and 5 in \cite{erlingsson2020encode} for details.}

In this paper, we provide a characterization of the RDP of the shuffle Gaussian mechanism.
Our principal result is showing that the R{\'e}nyi divergence of the shuffle Gaussian mechanism is
\begin{equation}\label{eq:main}
    D_{\lambda}(\calM(D)||\calM(D')) = \frac{1}{\lambda-1}\log\left(\frac{e^{-\lambda/2\sigma^2}}{n^\lambda}\sum_{\substack{k_1+\dotsc+k_n=\lambda;\\k_1,\dotsc,k_n\geq 0}}\binom{\lambda}{k_1,\dotsc,k_n}e^{\sum_{i=1}^nk_i^2/2\sigma^2}\right)
\end{equation}
To calculate the above expression numerically, we reduce it to a partition problem in number theory.
This enables us to perform tight privacy composition of any Gaussian-noise randomized mechanism coupled with a shuffler.
Furthermore, the shuffle Gaussian RDP characterization allows us to compute the RDP of subsampled shuffle Gaussian and shuffled check-in Gaussian mechanisms, which are mechanisms tailored to distributed learning.
To gauge the utility-privacy trade-offs of the variants of shuffle Gaussian mechanism in distributed learning, we also perform machine learning tasks experimentally in later sections.

\noindent\textbf{Related work.}
Early studies on the shuffle model put focus on $\eps_0$-local randomizer \cite{cheu2019distributed,erlingsson2019amplification,balle2019privacy}. In
\cite{balle2020privacy,feldman2022hiding}, extension to approximate DP, i.e., $(\eps_0,\delta_0)$-local randomizer has been worked on.
While methods for composing optimally approximate DP mechanisms are known \cite{kairouz2015composition,murtagh2016complexity}, they are unable to compose optimally mechanisms that satisfy multiple values of $(\eps,\delta)$ simultaneously, such as the Gaussian mechanism.

A preliminary analysis of the shuffle Gaussian mechanism using the Fourier/numerical accountant \cite{koskela2020computing,gopi2021numerical,zhu2022optimal} has been performed in \cite{koskela2021tight}.
This approach is known to give tight composition in general, but faces the curse of dimensionality;
only $n\lesssim 10$ ($n$ being the database size) can be evaluated within reasonable accuracy and amount of computation, not suitable for evaluating tasks like distributed learning.

Distributed learning with differential privacy is traditionally studied under the central model setting \cite{mcmahan2017learning,bonawitz2017practical, kairouz2021distributed}.
Most studies of distributed learning under the local or shuffle model have investigated only the $\eps_0$-local randomizer \cite{bhowmick2018protection,erlingsson2020encode,girgis2021differentially,girgis2021renyib}.
In \cite{liew2022shuffled}, the Gaussian mechanism is used in distributed learning, but the accounting of privacy is performed based on approximate DP, less optimal compared to our RDP approach.

\noindent\textbf{Paper organization.}
In the next section, we provide preliminaries of this paper, containing various related results of DP, as well as problem formulation.
Then, we prove our main results characterizing the RDP of shuffle Gaussian.  
In Section \ref{sec:extension}, the shuffle Gaussian mechanism is extended for applications in distributed learning.
Before closing, we give the numerical results in Section \ref{sec:numerics}.

\section{Preliminaries}\label{sec:pre}
\textbf{Problem formulation.}
Let there be $n$ users.
Let $D = (x_1, \dotsc, x_n)$ be a database of size $n$, where each of $i$, $i\in[n]$ is a data instance held by user $i$.
$x_i$ is a $d$-dimensional vector, $x_i\in \mathbb{R}^d$ and w.l.o.g., normalized to be $||x_i||_2\in [0.1]$.

Each user applies Gaussian mechanism $\mathcal{G}$ with variance $\sigma$ on $x_i$ and sends it, $\tilde{x_i}$ to the shuffler.
Let $\mathcal{S}_n:\mathcal{Y}^n\to \mathcal{Y}^n$ be the shuffling operation which takes $\tilde{x_i}$'s as input and outputs a uniformly and randomly permuted $\tilde{x_i}$'s.
The randomization mechanism $\mathcal{M}$ of interest is therefore
$$
\mathcal{M}(D) = \mathcal{S}_n (\mathcal{G}(x_1), \dotsc, \mathcal{G}(x_n)).
$$
Our purpose is to characterize and quantify the DP of $\mathcal{M}$ through RDP.

Let us next give definitions related to differential privacy and other known results. 

\begin{definition}[Central Differential Privacy \cite{Dwork14}]
A randomization mechanism, $\mathcal{M}: \mathcal{D}^n \rightarrow \calS$ with domain $\mathcal{D}^n$ and range $\mathcal{S}$ satisfies central ($\epsilon$, $\delta$)-differential privacy (DP), where $\eps \geq 0$, $\delta \in [0,1]$, if for any two adjacent databases $D, D' \in \mathcal{D}^n$ with $n$ data instances and for any subset of outputs $S \subseteq \mathcal{S}$, the following holds:
\begin{align}
\Pr[\mathcal{M}(D) \in S] \leq e^\epsilon \cdot \Pr[\mathcal{M}(D') \in S] + \delta.
\end{align}
\end{definition}
In this paper, adjacent databases are referred to as databases that have one data instance replaced by another; also known as ``replacement" DP in the literature.  
The $(\eps,\delta)$-DP mechanism is also said to satisfy \textit{approximate} DP, or is $(\eps,\delta)$-indistinguishable.
Furthermore, we simply refer to $(\eps,\delta)$-DP as ``DP" when it is unambiguous.
We also abbreviate $\mathcal{D}^n$ as $\mathcal{D}$ when the context is clear.

The randomization mechanism is known as local randomizer when the mechanism is applied to each data instance instead of the whole database.
Formally, this form of mechanism is said to satisfy local DP:
\begin{definition}[Local Differential Privacy (LDP) \cite{kasiviswanathan2011can}]
A randomization mechanism $\calA: \mathcal{D} \rightarrow \mathcal{S}$ satisfies local $(\eps,\delta)$-DP if for all pairs $x,x'\in \mathcal{D}$, $\calA(x)$ and $\calA(x')$ are $(\eps,\delta)$-indistinguishable.
\end{definition}
We often use $\eps_0$ instead of $\eps$ when referring to LDP. A mechanism satisfying $(\eps_0,0)$-LDP is also called an $\eps_0$-LDP randomizer.

Next, we define R{\'e}nyi differential privacy, the main privacy notion used throughout this work (see also \cite{dwork2016concentrated,bun2016concentrated}).
\begin{definition}[R{\'e}nyi Differential Privacy (RDP)~\cite{mironov2017renyi}]\label{def:RDP}
A randomization mechanism $\mathcal{M}: \mathcal{D} \rightarrow \mathcal{S}$ is said to satisfy  $(\lambda,\epsilon)$-RDP, for $\lambda\in(1,\infty)$, if for any adjacent databases $D$, $D'\in\calD$, the R{\'e}nyi divergence of order $\lambda$ between $\calM(D)$ and $\calM(D')$, as defined below, is upper-bounded by $\eps$:
\begin{equation}
D_{\lambda}(\calM(D)||\calM(D')):=\frac{1}{\lambda-1}\log\left(\mathbb{E}_{y\sim\calM(D')}\left[\left(\frac{\calM(D)(y)}{\calM(D')(y)}\right)^{\lambda}\right]\right)\leq \epsilon,
\label{eq:rdpdiv}
\end{equation}
where $\calM(\calD)(y)$ denotes $\calM$ taking $D$ as input to output $y$ with certain probability.
\end{definition}
We take a functional view of $\eps$, writing it as $\eps(\lambda)$.
We occasionally  call $\eps(\lambda)$ the RDP, and also refer to $D_{\lambda}(\calM(D)||\calM(D'))$ as the RDP when the context is clear.

The RDP is most useful at composing DP mechanisms, where it has cleaner composition than approximate DP. The formal description is given below.
\begin{lemma}[Adaptive RDP composition  \cite{mironov2017renyi}]
\label{lm:rdpcompose}
Let mechanisms $\calM_1,\calM_2$ taking $D\in\calD$ as input be $(\lambda,\eps_1)$,$(\lambda,\eps_2)$-RDP respectively, the composed mechanism, $\calM_1 \times \calM_2$ satisfies $(\lambda,\eps_1+\eps_2)$-RDP.
\end{lemma}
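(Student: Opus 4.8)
The plan is to reduce the statement to a single calculation on the joint output distribution, exploiting the multiplicative form of the R\'enyi divergence together with a chain-rule factorization of densities. Fix adjacent $D,D'\in\calD$ and write $\calM_1\times\calM_2$ for the adaptive composition: first $\calM_1$ is run to produce $y_1$, and then $\calM_2$ is run (its choice and internal randomness possibly depending on $y_1$) to produce $y_2$. Let $p(y_1,y_2)$ and $q(y_1,y_2)$ be the densities of $(\calM_1\times\calM_2)(D)$ and $(\calM_1\times\calM_2)(D')$ with respect to a common dominating $\sigma$-finite measure, and factor them as $p(y_1,y_2)=p_1(y_1)\,p_2(y_2\mid y_1)$ and $q(y_1,y_2)=q_1(y_1)\,q_2(y_2\mid y_1)$, where $p_1,q_1$ are the laws of $\calM_1(D),\calM_1(D')$ and $p_2(\cdot\mid y_1),q_2(\cdot\mid y_1)$ are the conditional laws of $\calM_2$ given that $\calM_1$ output $y_1$.

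First I would rewrite $\exp\!\big((\lambda-1)D_\lambda((\calM_1\times\calM_2)(D)\,\|\,(\calM_1\times\calM_2)(D'))\big)$ as $\int q(y_1,y_2)\,(p(y_1,y_2)/q(y_1,y_2))^{\lambda}\,dy_1\,dy_2$, substitute the two factorizations, and use Fubini to peel off the inner $y_2$-integral:
\[
\int q_1(y_1)\Big(\tfrac{p_1(y_1)}{q_1(y_1)}\Big)^{\lambda}\Bigg[\int q_2(y_2\mid y_1)\Big(\tfrac{p_2(y_2\mid y_1)}{q_2(y_2\mid y_1)}\Big)^{\lambda}dy_2\Bigg]dy_1 .
\]
The bracketed integral equals $\exp\!\big((\lambda-1)D_\lambda(\calM_2(D,y_1)\,\|\,\calM_2(D',y_1))\big)$, which by the $(\lambda,\eps_2)$-RDP guarantee of $\calM_2$ (applied with $y_1$ as a fixed auxiliary input) is at most $e^{(\lambda-1)\eps_2}$, \emph{uniformly in $y_1$}. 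Pulling this constant out leaves $e^{(\lambda-1)\eps_2}\int q_1(y_1)(p_1(y_1)/q_1(y_1))^{\lambda}dy_1 = e^{(\lambda-1)\eps_2}\,e^{(\lambda-1)\eps_1}$ by the $(\lambda,\eps_1)$-RDP guarantee of $\calM_1$. Taking logarithms and dividing by $\lambda-1>0$ gives $D_\lambda\le\eps_1+\eps_2$, as claimed; the general $k$-fold version then follows by induction, folding $\calM_1\times\cdots\times\calM_{k-1}$ into one mechanism.

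The step I expect to require the most care is the handling of adaptivity: the bound on the inner integral must hold for \emph{every} realization $y_1$, which is precisely why one needs $\calM_2$ to be $(\lambda,\eps_2)$-RDP as a mechanism that may receive $y_1$ as auxiliary input, rather than merely for a single fixed instantiation — keeping the conditioning explicit in the density factorization makes this transparent. The only other thing to verify is the measure-theoretic bookkeeping (existence of the Radon--Nikodym derivatives $p/q$ and legitimacy of the interchange of integrals), which is routine since all the laws involved are dominated by a common $\sigma$-finite measure.
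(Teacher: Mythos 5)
Your proof is correct and is essentially the standard argument for Proposition 1 of Mironov's RDP paper: factor the joint density via the chain rule, apply Fubini, bound the inner integral uniformly over $y_1$ using the RDP of $\calM_2$ with $y_1$ as auxiliary input, and then bound the outer integral using the RDP of $\calM_1$. The paper itself states this lemma as a known result with a citation and gives no proof, so there is nothing to compare against beyond noting that your argument matches the one in the cited reference, including the correct observation that adaptivity requires the $(\lambda,\eps_2)$ bound to hold for every realization of $y_1$.
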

After accounting for the privacy with RDP, the RDP notion is often converted to the conventional and interpretable approximate DP notion.
The conversion is given by the following lemma.
\begin{lemma}[RDP-to-DP conversion~\cite{canonne2020discrete,balle2020hypothesis}]\label{lem:RDP_DP} 
\label{lm:rdpdp}
A mechanism $\calM$ satisfying $\left(\lambda,\epsilon\left(\lambda\right)\right)$-RDP also satisfies $\left(\epsilon,\delta\right)$-DP, where $1<\delta<0$ is arbitrary and $\eps$ is given by
\begin{equation}
\label{eq:rdpdp}
\epsilon = \min_{\lambda} \left(\epsilon\left(\lambda\right)+\frac{\log\left(1/\delta\right)+\left(\lambda-1\right)\log\left(1-1/\lambda\right)-\log\left(\lambda\right)}{\lambda-1}\right).
\end{equation}
\end{lemma}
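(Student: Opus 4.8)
The plan is to reduce the claim to an upper bound on the hockey‑stick divergence between $P:=\calM(D)$ and $Q:=\calM(D')$, and then to combine a sharp one‑variable inequality with the moment bound that the RDP hypothesis supplies. First I would recall the standard fact that $\calM$ is $(\eps,\delta)$-DP if and only if $\sup_{S}\bigl(\Pr[P\in S]-e^{\eps}\Pr[Q\in S]\bigr)\le\delta$, and that this supremum is attained at the likelihood‑ratio superlevel set $S^{\star}=\{\,y:\ \tfrac{\mathrm{d}P}{\mathrm{d}Q}(y)>e^{\eps}\,\}$. Writing $W:=\tfrac{\mathrm{d}P}{\mathrm{d}Q}$ (which is well defined because $D_{\lambda}(P\|Q)\le\eps(\lambda)<\infty$ forces $P\ll Q$ for $\lambda>1$), it therefore suffices to show $\mathbb{E}_{y\sim Q}\bigl[(W(y)-e^{\eps})_{+}\bigr]\le\delta$.

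The heart of the argument is the pointwise inequality
\[
(x-c)_{+}\ \le\ \frac{(\lambda-1)^{\lambda-1}}{\lambda^{\lambda}}\,c^{-(\lambda-1)}\,x^{\lambda}\qquad\text{for all }x\ge 0,\ c>0,\ \lambda>1,
\]
which I would obtain by maximizing $x\mapsto (x-c)/x^{\lambda}$ over $x>c$: differentiation gives the maximizer $x^{\star}=c\lambda/(\lambda-1)$ and hence the stated constant, noting that $(\lambda-1)^{\lambda-1}/\lambda^{\lambda}=(1-1/\lambda)^{\lambda}/(\lambda-1)$. Substituting $x=W$, $c=e^{\eps}$, taking expectations under $Q$, and using $\mathbb{E}_{y\sim Q}[W(y)^{\lambda}]=e^{(\lambda-1)D_{\lambda}(P\|Q)}\le e^{(\lambda-1)\eps(\lambda)}$ from Definition~\ref{def:RDP}, yields
\[
\mathbb{E}_{y\sim Q}\bigl[(W(y)-e^{\eps})_{+}\bigr]\ \le\ \frac{(1-1/\lambda)^{\lambda}}{\lambda-1}\,e^{(\lambda-1)(\eps(\lambda)-\eps)}.
\]

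It then remains to require this right‑hand side to be at most $\delta$ and solve for $\eps$: taking logarithms gives $(\lambda-1)\eps\ge(\lambda-1)\eps(\lambda)+\lambda\log(1-1/\lambda)-\log(\lambda-1)+\log(1/\delta)$, and the elementary identity $\lambda\log(1-1/\lambda)-\log(\lambda-1)=(\lambda-1)\log(1-1/\lambda)-\log\lambda$ (both sides equal $(\lambda-1)\log(\lambda-1)-\lambda\log\lambda$) rewrites this as exactly the bracketed expression of \eqref{eq:rdpdp}. Since the RDP hypothesis is assumed at order $\lambda$, the resulting $(\eps,\delta)$-DP guarantee holds for every such $\lambda$, so one finally takes the infimum over $\lambda$ to get the $\min_{\lambda}$ in the statement. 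The only delicate point — rather than a genuine obstacle — is the measure‑theoretic justification that the hockey‑stick supremum is attained at $S^{\star}$ and that no mutual‑singularity term intervenes, which is vacuous here since finite Rényi divergence already precludes it; everything else is the single optimization above together with the logarithmic bookkeeping.
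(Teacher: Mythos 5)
Your argument is correct and is essentially the standard proof of this conversion from the cited sources (the paper itself imports the lemma without proof): bounding the hockey-stick divergence $\mathbb{E}_{Q}[(W-e^{\eps})_{+}]$ via the optimized pointwise inequality $(x-c)_{+}\le \frac{(\lambda-1)^{\lambda-1}}{\lambda^{\lambda}}c^{-(\lambda-1)}x^{\lambda}$ and the moment bound $\mathbb{E}_{Q}[W^{\lambda}]\le e^{(\lambda-1)\eps(\lambda)}$, then solving for $\eps$; your algebraic identity reconciling the two forms of the constant checks out. The only remark worth adding is that the statement's condition ``$1<\delta<0$'' is a typo for $0<\delta<1$, which your proof implicitly (and correctly) assumes.
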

In the above definitions, we have not specified what the underlying randomization mechanisms are.
We next give the definition of the core mechanism used in this work, the Gaussian mechanism.
\begin{definition}[Gaussian mechanism] \label{def:gauss}
Given $x\in\mathbb{R}^d$, the Gaussian mechanism applied to $x$ is a mechanism with parameter $\sigma$ that adds zero mean isotropic Gaussian perturbation of variance $\sigma^2$ to $x$, outputting $\mathcal{M}(x)= x+\mathcal{N}(0,\sigma^2I_d)$.
\end{definition}

\section{The RDP of shuffle Gaussian}
We first derive the R{\'e}nyi divergence (of order $\lambda$) of the shuffle Gaussian mechanism in this section.
Subsequently, an upper bound on the notion is given, followed by a description of the numerical techniques for evaluating the RDP.

The main result of this paper is given by the following theorem.
\begin{theorem}[Shuffle Gaussian RDP]
\label{th:main}
The R{\'e}nyi divergence (of order $\lambda$) of the shuffle Gaussian mechanism with variance $\sigma^2$ for neighboring datasets with $n$ instances is given by
$$
    D_{\lambda}(\calM(D)||\calM(D')) = \frac{1}{\lambda-1}\log\left(\frac{e^{-\lambda/2\sigma^2}}{n^\lambda}\sum_{\substack{k_1+\dotsc+k_n=\lambda;\\k_1,\dotsc,k_n\geq 0}}\binom{\lambda}{k_1,\dotsc,k_n}e^{\sum_{i=1}^nk_i^2/2\sigma^2}\right)
$$
\end{theorem}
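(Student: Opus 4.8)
The plan is to reduce the R\'enyi divergence to a one–dimensional computation over a single worst–case neighboring pair, and then evaluate the resulting $\lambda$–th moment with the multinomial theorem and the Gaussian moment generating function. First I would fix the pair: by translation– and rotation–invariance of both the Gaussian mechanism and the shuffle, it suffices to take $D'=(0,\dotsc,0)$, with all $n$ users holding the origin of $\bbR^d$, and $D=(e_1,0,\dotsc,0)$, where $e_1$ is a unit vector carrying the single replaced coordinate. The $d-1$ directions orthogonal to $e_1$ have mean $0$ for every user under \emph{both} datasets, so they contribute one common independent Gaussian factor to $\calM(D)$ and $\calM(D')$; by the product rule for R\'enyi divergence this factor drops out and $D_{\lambda}(\calM(D)\|\calM(D'))$ equals the corresponding one–dimensional divergence. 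In one dimension, writing $\phi_\mu$ for the $\calN(\mu,\sigma^2)$ density, $\calM(D')$ has density $Q(y_1,\dotsc,y_n)=\prod_{j=1}^n\phi_0(y_j)$ (shuffling acts trivially since all inputs coincide), while for $\calM(D)$ the shuffle amounts to choosing uniformly which output slot carries the shifted coordinate, giving
$$
P(y_1,\dotsc,y_n)=\frac1n\sum_{k=1}^n \phi_1(y_k)\prod_{j\ne k}\phi_0(y_j).
$$

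The heart of the argument is then a direct moment computation. Dividing, $P(y)/Q(y)=\frac1n\sum_{k=1}^n L_k$ with $L_k=\phi_1(y_k)/\phi_0(y_k)=\exp\!\big(y_k/\sigma^2-1/2\sigma^2\big)$, and under $y\sim\calM(D')$ the $y_k$ are i.i.d.\ $\calN(0,\sigma^2)$. For integer $\lambda$ the multinomial theorem gives
$$
\bbE_{y\sim\calM(D')}\!\left[\Big(\tfrac1n\sum_{k}L_k\Big)^{\lambda}\right]=\frac{1}{n^\lambda}\sum_{\substack{k_1+\dotsc+k_n=\lambda;\\k_1,\dotsc,k_n\ge0}}\binom{\lambda}{k_1,\dotsc,k_n}\,\bbE\!\left[\prod_{i=1}^n L_i^{k_i}\right],
$$
and by independence the expectation factors as $\prod_{i}\bbE[L_i^{k_i}]$. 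Each factor is a Gaussian MGF evaluation, $\bbE[L_i^{k_i}]=e^{-k_i/2\sigma^2}\,\bbE[e^{k_i y_i/\sigma^2}]=e^{-k_i/2\sigma^2}e^{k_i^2/2\sigma^2}=e^{(k_i^2-k_i)/2\sigma^2}$. Multiplying over $i$ and using $\sum_i k_i=\lambda$ pulls out the factor $e^{-\lambda/2\sigma^2}$ and leaves $e^{-\lambda/2\sigma^2}e^{\sum_i k_i^2/2\sigma^2}$ inside the sum; applying $\frac{1}{\lambda-1}\log(\cdot)$ yields exactly the stated expression.

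The step I expect to be the main obstacle is the very first one: justifying that the clustered configuration $(D,D')$ above (and this ordering of the pair) is the \emph{extremal} neighboring instance under the normalization that the per–user sensitivity is $1$, rather than merely a convenient one. The invariance arguments handle the reduction to $d=1$ cleanly, but one must still argue that no alternative placement of the $n-1$ unchanged users, nor the opposite ordering, can increase the divergence. The remaining ingredients — the product rule for R\'enyi divergence, the multinomial expansion (valid at the integer orders $\lambda\ge2$ that RDP accounting actually uses), and the Gaussian MGF — are routine. I note that rewriting the finite multinomial sum into a form that is numerically tractable (via an integer–partition enumeration) is a separate matter and not needed to establish the identity itself.
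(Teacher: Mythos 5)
Your proposal is correct and follows essentially the same route as the paper: reduce to the one-dimensional pair with one shifted user, write the shuffled outputs as a product Gaussian versus a uniform $n$-component mixture, expand the $\lambda$-th moment of the likelihood ratio with the multinomial theorem, and evaluate each term via the Gaussian MGF to get $\prod_i e^{(k_i^2-k_i)/2\sigma^2}$. The extremality concern you flag is handled in the paper only by a footnote appealing to normalization and a unitary transform, so your proof is no less complete than theirs on that point.
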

\begin{proof}
We consider w.l.o.g. adjacent databases $D,D' \in \mathbb{R}^n$ with one-dimensional data instances $D = (0, \dotsc, 0)$, $D'= (1,0, \dotsc, 0)$. \footnote{This can be done w.l.o.g. by normalizing and performing unitary transform on the data instances. See also Appendix B.1 of \cite{koskela2020computing}.}
Note that the databases can also be represented in the form of $n$-dimensional vector: where $D$ is simply $0$, an $n$-dimensional vector with all elements equal to zero, and $D' = e_1$, where $e_i$ is a unit vector with the elements in all dimensions except the $n$-th ($i\in[n]$) one equal to zero.

The shuffle Gaussian mechanism first applies Gaussian noise (with variance $\sigma^2$) to each data instance, and subsequently shuffle the noisy instances.
The shuffled output of $D$ distributes as $\mathcal{M}(D) \sim (\mathcal{N}(0,\sigma^2), \dotsc, \mathcal{N}(0,\sigma^2))$, an $n$-tuple of $\mathcal{N}(0,\sigma^2)$.
Intuitively, the adversary sees a anonymized set of randomized data of size $n$. 
Due to the homogeneity of $D$, we can write
$\mathcal{M}(D) \sim \mathcal{N}(0,\sigma^2I_n)$ using the vector notation mentioned above. \footnote{Note that the vector space defined here is over the dataset dimension instead of input dimension as in Definition \ref{def:gauss}.}

On the other hand, since all data instances except one is 0 in $D'$, the mechanism can be written as a mixture distribution: $\mathcal{M}(D')\sim \frac{1}{n} (\mathcal{N}(e_1,\sigma^2I_n)+ \dotsc + \mathcal{N}(e_n,\sigma^2I_n))$.
Intuitively, each output has probability $1/n$ carrying the non-zero element, and hence the mechanism as a whole is a uniform mixture of $n$ distributions.
In summary, the neighbouring databases of interest are
\begin{align}
\mathcal{M}(D) &\sim \mathcal{N}(0,\sigma^2I_n),\\
\mathcal{M}(D')&\sim \frac{1}{n} \left(\mathcal{N}(e_1,\sigma^2I_n)+ \dotsc + \mathcal{N}(e_n,\sigma^2I_n)\right).
\end{align}

Note that $\mathbb{P}(\mathcal{N}(0,\sigma^2))= \exp[-x^2/2\sigma^2]/\sqrt{2\pi \sigma^2}$, and
\begin{align}
  n \cdot \left(2\pi\sigma^2\right)^{n/2}\cdot \mathbb{P}(\mathcal{M}(D')(x)) &= \exp\left[-(x_1-1)^2/2\sigma^2-\sum_{i=2}^n x_i^2/2\sigma^2\right]+\dotsc \nonumber \\
    &+ \exp\left[-(x_n-1)^2/2\sigma^2-\sum_{i=1}^{n-1} x_i^2/2\sigma^2\right].
\end{align}
That is, for $j\in[n]$, the $j$-th component of the mixture distribution contains a term proportional to $\exp{\left[-(x_j-1)^2/2\sigma^2-\sum_{i\neq j} x_i^2/2\sigma^2\right]}$.

We are concerned with calculating the following quantity, $\mathbb{E}_{x\sim \mathcal{M}(D)}\left[\left(\frac{\mathcal{M}(D')(x)}{\mathcal{M}(D)(x)}\right)^\lambda\right]$ for RDP:
\begin{align}
    &\mathbb{E}_{x\sim \mathcal{M(D)}}\left[\left(\frac{\mathcal{M}(D')(x)}{\mathcal{M}(D)(x)}\right)^\lambda\right]\nonumber \\ 
    &= \int \left(\frac{\exp{\left[-(x_1-1)^2/2\sigma^2-\sum_{i=2}^n x_i^2/2\sigma^2\right]}+\dotsc}{n\exp[-\sum_{i=1}^nx_i^2/2\sigma^2]}\right)^{
\lambda}\exp[-\sum_{i=1}^nx_i^2/2\sigma^2] \frac{d^nx}{(2\pi \sigma^2)^{n/2}} \label{eq:main1} \\
&= \int \left(\frac{\exp{\left[-(x_1-1)^2/2\sigma^2+x_1^2/2\sigma^2\right]}+\dotsc}{n}\right)^{
\lambda}\exp[-\sum_{i=1}^nx_i^2/2\sigma^2] \frac{d^nx}{(2\pi \sigma^2)^{n/2}} \label{eq:main2}\\
&= \int \left(\exp{\left[(2x_1-1)/2\sigma^2\right]} + \dotsc \right)^{\lambda}\exp[-\sum_{i=1}^nx_i^2/2\sigma^2]\frac{d^nx}{n^\lambda(2\pi \sigma^2)^{n/2}} \label{eq:main3} \\
&= \int \left(\sum_{i=1}^n\exp{\left[(2x_i-1)/2\sigma^2\right]}\right)^{\lambda}\exp[-\sum_{i=1}^nx_i^2/2\sigma^2]\frac{d^nx}{n^\lambda(2\pi \sigma^2)^{n/2}}.\label{eq:main4}
\end{align}
Let us explain the above calculation in detail.
We first notice that $\mathbb{E}_{x\sim \mathcal{M(D)}}\left[\left(\frac{\mathcal{M}(D')(x)}{\mathcal{M}(D)(x)}\right)^\lambda\right]$ is an $n$-th dimensional integral of $x_i$ ($i\in[n]$), as shown in Equation \ref{eq:main1}.
For each term $j\in[n]$ in the nominator of $(\dotsc)^\lambda$, we divide it by the denominator $\exp[-\sum_{i=1}^nx_i^2/2\sigma^2]$, yielding $\exp[-(x_j-1)^2/2\sigma^2-\sum_{i\neq j}^nx_i^2/2\sigma^2+\sum_{i=1}^nx_i^2/2\sigma^2]= \exp[-(x_j-1)^2/2\sigma^2 + x_j^2/2\sigma^2]$, with all $x_i$ terms in $i\in[n]$ except $j$ canceled out (Equation \ref{eq:main2}).
The term can be further simplified to $\exp[(2x_j-1)/2\sigma^2]$ as in Equation \ref{eq:main4}.

Then, we expand the expression $(\dotsc)^\lambda$ using the multinomial theorem:
\begin{align}
\label{eq:k-integral}
    \left(\sum_{i=1}^n\exp{\left[(2x_i-1)/2\sigma^2\right]}\right)^{\lambda} = \sum_{\substack{k_1+...+k_n=\lambda;\\k_1,\dotsc,k_n\geq 0}}\binom{\lambda}{k_1,\dotsc,k_n}\prod_{i=1}^n\exp\left[k_i(2x_i-1)/2\sigma^2\right],
\end{align}
where $\binom{\lambda}{k_1,\dotsc,k_n} = \frac{\lambda!}{k_1!k_2!\dotsc k_n!}$ is the multinomial coefficient, and $k_i \in \mathbb{Z^+}$ for $i\in[n]$.

Before proceeding, we make a detour to calculate the following integral (with $k\in \mathbb{Z^+}$):
\begin{align}
    &\int \exp[k(2x-1)/2\sigma^2]\exp[-x^2/2\sigma^2]\frac{dx}{\sqrt{2\pi\sigma^2}} \nonumber \\
    &= \int \exp[-(x-k)^2/2\sigma^2 +(k^2-k)/2\sigma^2]\frac{dx}{\sqrt{2\pi\sigma^2}} \nonumber \\
    & = \exp[(k^2-k)/2\sigma^2]. \nonumber
\end{align}
Using the above expression and Equation \ref{eq:k-integral}, we can write Equation \ref{eq:main4} as
\begin{align}
&\int \left(\sum_{i=1}^n\exp{\left[(2x_i-1)/2\sigma^2\right]}\right)^{\lambda}\exp[-\sum_{i=1}^nx_i^2/2\sigma^2]\frac{d^nx}{n^\lambda(2\pi \sigma^2)^{n/2}} \nonumber\\
    &=\int\sum_{\substack{k_1+...+k_n=\lambda;\\k_1,\dotsc,k_n\geq 0}} \binom{\lambda}{k_1,\dotsc,k_n}\prod_{i=1}^n\exp\left[k_i(2x_i-1)/2\sigma^2\right] \frac{\exp[-x_i^2/2\sigma^2]d^nx}{n^\lambda(2\pi \sigma^2)^{n/2}} \nonumber \\
    &= \frac{1}{n^\lambda}\sum_{\substack{k_1+...+k_n=\lambda;\\k_1,\dotsc,k_n\geq 0}} \binom{\lambda}{k_1,\dotsc,k_n}\prod_{i=1}^n\exp[(k_i^2-k_i)/2\sigma^2],
\end{align}
where we have moved the $1/n^\lambda$ factor to the front.
Noticing that $\prod_{i=1}^n\exp[-k_i] = \exp[-\sum_i^{n}k_i]$ and that $\sum_{i=1}^nk_i=\lambda$ under the multinomial constraint, we have
\begin{align}
\label{eq:main5}
& \frac{1}{n^\lambda}\sum_{\substack{k_1+...+k_n=\lambda;\\k_1,\dotsc,k_n\geq 0}} \binom{\lambda}{k_1,\dotsc,k_n}\prod_{i=1}^n\exp[(k_i^2-k_i)/2\sigma^2]\\
    &=\frac{e^{-\lambda/2\sigma^2}}{n^\lambda}\sum_{\substack{k_1+\dotsc+k_n=\lambda;\\k_1,\dotsc,k_n\geq 0}}\binom{\lambda}{k_1,\dotsc,k_n}e^{\sum_{i=1}^nk_i^2/2\sigma^2}.
\end{align}
Combining the above expression with Equation \ref{eq:rdpdiv}, we obtain the expression given in Equation \ref{eq:main} as desired.
\end{proof}

Using the above results, we next give an upper bound on the shuffle Gaussian RDP.
\begin{corollary}[Upper bound of shuffle Gaussian RDP]
\label{co:upper}
The shuffle Gaussian RDP $\eps(\lambda)$ is upper-bounded by $\lambda/2\sigma^2$.
\end{corollary}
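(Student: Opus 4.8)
The plan is to work directly from the closed form established in Theorem~\ref{th:main} and reduce the claimed inequality to a one-line comparison of a multinomial sum against its value at equal weights. Since $\eps(\lambda) = D_\lambda(\calM(D)||\calM(D'))$ and the prefactor $1/(\lambda-1)$ is positive for $\lambda>1$, the bound $\eps(\lambda)\le \lambda/2\sigma^2$ is equivalent (after multiplying by $\lambda-1$ and exponentiating) to
\[
\frac{e^{-\lambda/2\sigma^2}}{n^\lambda}\sum_{\substack{k_1+\dots+k_n=\lambda;\\ k_1,\dots,k_n\ge 0}}\binom{\lambda}{k_1,\dots,k_n}e^{\sum_{i=1}^n k_i^2/2\sigma^2}\ \le\ e^{(\lambda-1)\lambda/2\sigma^2},
\]
i.e., after clearing the exponential prefactor, to $\sum_{k_1+\dots+k_n=\lambda}\binom{\lambda}{k_1,\dots,k_n}e^{\sum_i k_i^2/2\sigma^2}\le n^\lambda e^{\lambda^2/2\sigma^2}$. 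So the entire statement comes down to bounding that sum.

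The key observation is that for every index tuple appearing in the sum we have $\sum_{i=1}^n k_i^2 \le (\sum_{i=1}^n k_i)^2 = \lambda^2$, because the $k_i$ are nonnegative integers summing to $\lambda$ (the cross terms $\sum_{i\ne j}k_i k_j$ are nonnegative). Hence each exponential weight satisfies $e^{\sum_i k_i^2/2\sigma^2}\le e^{\lambda^2/2\sigma^2}$, a bound independent of the tuple. Pulling this uniform bound out of the sum and invoking the multinomial theorem with all arguments equal to $1$, namely $\sum_{k_1+\dots+k_n=\lambda}\binom{\lambda}{k_1,\dots,k_n}=n^\lambda$, yields exactly $\sum\binom{\lambda}{k_1,\dots,k_n}e^{\sum_i k_i^2/2\sigma^2}\le n^\lambda e^{\lambda^2/2\sigma^2}$; dividing back through by $n^\lambda e^{\lambda/2\sigma^2}$ and taking logarithms recovers $\eps(\lambda)\le\lambda/2\sigma^2$.

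I do not anticipate a genuine obstacle here: the argument is just nonnegativity of the cross terms plus the multinomial identity, and the only points to be careful about are that $1/(\lambda-1)>0$ keeps the inequality direction intact and that nonnegativity of the $k_i$ is precisely what makes $\sum_i k_i^2\le\lambda^2$ hold. I would also record, as an aside, that equality in $\sum_i k_i^2\le\lambda^2$ holds only for the $n$ ``concentrated'' tuples in which a single $k_i$ equals $\lambda$; every other tuple contributes strictly less, so for $n\ge 2$ and $\lambda\ge 2$ the inequality is in fact strict, which is how one obtains the strict domination by the non-shuffled Gaussian RDP $\lambda/2\sigma^2$ advertised in the introduction.
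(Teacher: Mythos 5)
Your proposal is correct and is essentially the paper's own argument: bound each $\sum_i k_i^2$ by $\left(\sum_i k_i\right)^2=\lambda^2$ (the paper attributes this to Cauchy--Schwarz, though your justification via nonnegativity of the cross terms is the cleaner one), pull the uniform factor $e^{\lambda^2/2\sigma^2}$ out of the sum, and collapse the remaining multinomial sum to $n^\lambda$. Your added observation that the inequality is strict for $n\ge 2$ and $\lambda\ge 2$ is a correct and worthwhile refinement beyond what the paper records.
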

\begin{proof}
From the Cauchy-Schwartz inequality, $\sum_{i=1}^n k_i^2 \leq (\sum_{i=1}^n k_i)^2 = \lambda^2$.
Then, from Equation \ref{eq:main5},
\begin{align*}
    \frac{e^{-\lambda/2\sigma^2}}{n^\lambda}&\sum_{\substack{k_1+\dotsc+k_n=\lambda;\\k_1,\dotsc,k_n\geq 0}}\binom{\lambda}{k_1,\dotsc,k_n}e^{\sum_{i=1}^nk_i^2/2\sigma^2} \\ &\leq \frac{e^{-\lambda/2\sigma^2}}{n^\lambda}\sum_{\substack{k_1+\dotsc+k_n=\lambda;\\k_1,\dotsc,k_n\geq 0}}\binom{\lambda}{k_1,\dotsc,k_n}e^{\lambda^2/2\sigma^2}\\
    &= e^{(\lambda^2-\lambda)/2\sigma^2},
\end{align*}
using in the last line,  $\sum_{\substack{k_1+\dotsc+k_n=\lambda;\\k_1,\dotsc,k_n\geq 0}}\binom{\lambda}{k_1,\dotsc,k_n}=n^\lambda$.
Substituting this to Equation \ref{eq:rdpdiv}, we get the desired expression.
\end{proof}
This corollary states that the upper bound of the shuffle Gaussian RDP is equal to the Gaussian RDP without shuffling \cite{mironov2017renyi}.
Let us emphasize that this is a strictly tight upper bound compared with previous studies.
In early analyses of shuffling \cite{erlingsson2019amplification,balle2020privacy}, the resulting $\eps$ after shuffling can be larger than the local one, $\eps_0$, especially at the large $\eps_0$ region.
In \cite{feldman2022hiding}, the shuffle $\eps$ is shown, only numerically, to saturate at $\eps_0$.
We have instead proved in the above corollary, at least for the Gaussian mechanism, that the upper bound is analytically and strictly the same as the one without shuffling.
This also means that there is no such phenomenon as ``privacy degradation"  due to shuffling in Gaussian mechanism.

\subsection{Evaluating the shuffle Gaussian RDP numerically}
Evaluating the shuffle Gaussian RDP numerically is non-trivial, as $n$ can be as large as $10^5$ in distributed learning scenarios and $\lambda$ can be large as well.
Here, we describe our numerical recipes for calculating the RDP.

We exploit the permutation invariance inherent in Equation \ref{eq:main} to perform more efficient computation.
Equation \ref{eq:main} contains a sum of multinomial coefficients multiplied by an expression with integer $k_i$ ($i\in[n]$) constrained by $k_1 + \dotsc +k_n = \lambda$.
We first obtain all possible $k_i$'s with the above constraint, without counting same summands differing only in the order.
This is integer partition, a subset sum problem known to be NP \cite{kleinberg2006algorithm}.
It is also related to partition in number theory \cite{andrews2004integer}.
Nevertheless, the calculation can be performed efficiently (e.g., \cite{kelleher2009generating}) and one can cache the partitions to save computational time.


Note that there is a one-to-one mapping between the term $e^{\sum_{i=1}^nk_i^2/2\sigma^2}$ in Equation \ref{eq:main5} and the partition obtained above.
We can therefore calculate the number of permutation corresponding to each partition, multiplied by the multinomial coefficient, and $e^{\sum_{i=1}^nk_i^2/2\sigma^2}$, and sum over all the partitions to obtain the numerical expression of Equation $\ref{eq:main5}$.
For a partition with the number of repetition of exponents with the same degrees (or the counts of unique $k_1,\dotsc,k_n$) being $\kappa_1, \dotsc, \kappa_{l}$ ($l\leq n$), the number of permutation can be calculated to be $n!/(\kappa_1!\dotsc \kappa_l!)$.
See Appendix \ref{app:algo} for details.
The algorithm of calculating the RDP can also be found in Algorithm \ref{alg:num}.

Since Equation \ref{eq:main} is of the form of a logarithm, evaluation with the log-sum-exp technique leads to more stable results.
Moreover, the factorials appearing in the multinomial coefficients can be calculated efficiently with the log gamma function (implemented in \texttt{scipy}).

\begin{algorithm}[t]
\caption{Numerical evaluation of Equation \ref{eq:main}.}
\label{alg:num}
\begin{algorithmic}[1]
\State \textbf{Inputs:} Database size $n$, RDP order $\lambda$, variance of Gaussian mechanism $\sigma^2$.
\State \textbf{Output:} RDP parameter $\eps$.
\State \textbf{Initialize:} $\Gamma:\{\emptyset\}$, $\texttt{Sum}=0$.
\State $\Gamma \leftarrow \texttt{GetPartition}(\lambda)$
\For { $\rho \in \Gamma$}
\State $\{\kappa_i\} \leftarrow \texttt{GetUniqueCount}(\rho)$
\State $\texttt{Sum} \leftarrow \texttt{Sum} + \frac{e^{-\lambda/2\sigma^2}}{n^\lambda}\binom{\lambda}{k_1\dotsc k_n}\frac{n!}{\kappa_1!\dotsc\kappa_i!} e^{\sum_{j=1}^{n}k_j^2}$ where $\{k_i\}\in \rho$
\EndFor
\State \textbf{Return:} $\eps = \texttt{Sum}$
\end{algorithmic}
\end{algorithm}

\section{Extensions of the Shuffle Gaussian}
\label{sec:extension}
In this section, we extend the study of the shuffle Gaussian RDP to mechanisms more attuned to distributed learning.
Specifically, we consider mechanisms where only a subset of the dataset participates in training, typical for large-scale distributed learning.
Throughout our study, we consider protecting the privacy of users each holding a single data instance.

\subsection{Subsampled shuffle Gaussian mechanism}
\label{subsec:subshuff}
We first consider the mechanism where a fixed number of users, $m$, is sampled randomly from $n$ available users to participate in training.
The sampled users randomize their data with Gaussian noises and pass them to a shuffler.
We call this overall procedure the subsampled (without replacement) shuffle Gaussian mechanism. \footnote{See also  \cite{girgis2021shuffled}, where a similar protocol except that an $\eps_0$-local randomizer is utilized, has been considered.}.

The RDP of the subsampled shuffle Gaussian mechanism is given by the following theorem.
\begin{theorem}[Subsampled Shuffle Gaussian RDP]
\label{th:subshuff}
Let $n$ be the total number of users, $m$ be the number of subsampled users in each round of training, and $\gamma$ be the subsampling rate, $\gamma = m/n$. Also let $\eps^{\rm SG}_m(\lambda)$ be the shuffle Gaussian RDP given in Theorem \ref{th:main}.
The subsampled shuffle Gaussian mechanism satisfies $(\lambda,\eps^{\rm SSG}_{\gamma,m}(\lambda))$-RDP, where 
\begin{align}
\label{eq:ssg}
    \eps^{\rm SSG}_{\gamma,m}(\lambda) \leq \frac{1}{\lambda -1}
\log\left(1 + \gamma^2 \binom{\lambda}{2} \min\left\{4(e^{\eps^{\rm SG}_m(2)}-1), 2 e^{\eps^{\rm SG}_m(2)}\right\}
+\sum_{j=3}^{\lambda}2\gamma^j \binom{\lambda}{j} e^{(j-1)\eps^{\rm SG}_m(j)}
\right)
\end{align}
\end{theorem}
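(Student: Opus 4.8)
The plan is to recognize the subsampled shuffle Gaussian mechanism as uniform subsampling without replacement followed by the $m$-user shuffle Gaussian mechanism $\calM^{\rm SG}_m$, and then to run the generic amplification-by-subsampling analysis for R\'enyi DP (in the style of Balle--Barthe--Gaboardi and Wang--Balle--Kasiviswanathan) with the base RDP profile $\eps^{\rm SG}_m$ supplied by Theorem~\ref{th:main}. Concretely, fix neighbouring $D,D'$ differing only in user~$1$'s record, let $S$ be the uniformly random size-$m$ sample, and condition on whether $1\in S$, an event of probability $\gamma=m/n$. Writing $\mu_0$ for the (common) output law on the event $\{1\notin S\}$ and $\mu_1,\mu_1'$ for the $m$-user shuffle Gaussian output laws on $\{1\in S\}$ under $D$ and $D'$ respectively, one obtains the mixture representation
\[
P:=\calM^{\rm SSG}(D)=(1-\gamma)\mu_0+\gamma\mu_1,\qquad
Q:=\calM^{\rm SSG}(D')=(1-\gamma)\mu_0+\gamma\mu_1'.
\]
Since $\mu_1$ and $\mu_1'$ are themselves uniform mixtures over the choice of the remaining $m-1$ sampled users, and for every such choice the two conditional laws are outputs of $\calM^{\rm SG}_m$ on a neighbouring $m$-record pair, joint quasi-convexity of the R\'enyi divergence together with Theorem~\ref{th:main} yields $\mathbb{E}_{\mu_1'}[(\mu_1/\mu_1')^{j}]\le e^{(j-1)\eps^{\rm SG}_m(j)}$ for every integer $j\ge 2$, and likewise with $\mu_1$ and $\mu_1'$ interchanged (the neighbouring relation being symmetric).

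Next I would expand the quantity $\mathbb{E}_{y\sim Q}[(P(y)/Q(y))^{\lambda}]$ that governs the R\'enyi divergence. Since $P=Q+\gamma(\mu_1-\mu_1')$, the binomial theorem gives
\[
\mathbb{E}_{y\sim Q}\!\left[\left(\frac{P(y)}{Q(y)}\right)^{\lambda}\right]
=1+\sum_{j=2}^{\lambda}\binom{\lambda}{j}\gamma^{j}\,T_j,\qquad
T_j:=\mathbb{E}_{y\sim Q}\!\left[\left(\frac{\mu_1(y)-\mu_1'(y)}{Q(y)}\right)^{j}\right],
\]
where the $j=0$ term equals $1$ and the $j=1$ term vanishes because $\mu_1$ and $\mu_1'$ both integrate to one. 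It then remains to bound each $T_j$ in terms of the base profile: a second-moment estimate, carried out both directly and after bounding $|\mu_1-\mu_1'|$ by $\mu_1+\mu_1'$ and splitting according to the sign of $\mu_1-\mu_1'$, produces the two competing bounds whose minimum is $\min\{4(e^{\eps^{\rm SG}_m(2)}-1),\,2e^{\eps^{\rm SG}_m(2)}\}$, while for $j\ge 3$ a coarser argument of the same flavour gives $T_j\le 2e^{(j-1)\eps^{\rm SG}_m(j)}$. Substituting into the expansion, applying $\tfrac{1}{\lambda-1}\log(\cdot)$, and taking the supremum over neighbouring pairs then yields the stated bound on $\eps^{\rm SSG}_{\gamma,m}(\lambda)$; here the Gaussian likelihood ratio being unbounded, i.e.\ $\eps^{\rm SG}_m(\infty)=\infty$, is precisely what collapses the general minimum-expressions of the subsampling-amplification bound to exactly the form displayed in the theorem.

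The step I expect to be the main obstacle is the moment control of $T_j$: because $Q$ appears in the denominator raised to the power $j-1$, crude lower bounds on $Q$ lose powers of $\gamma$, so one must invoke the ``advanced joint convexity'' of the R\'enyi divergence to recover the correct $\gamma^{j}$ scaling with clean constants --- this is exactly the content that the generic subsampling-amplification theorem packages, and the only mechanism-specific inputs it consumes are the integer-order RDP profile $\eps^{\rm SG}_m$ from Theorem~\ref{th:main} and the fact that $\eps^{\rm SG}_m(\infty)=\infty$. A secondary point requiring care is the quasi-convexity reduction of the first paragraph, which relies on $\eps^{\rm SG}_m$ bounding the R\'enyi divergence of $\calM^{\rm SG}_m$ on \emph{every} neighbouring $m$-record pair rather than only the canonical one, so that it controls $\mu_1$ and $\mu_1'$ even though the remaining users carry arbitrary data.
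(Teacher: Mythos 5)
Your proposal is correct and takes essentially the same route as the paper: the paper's proof simply invokes Theorem~9 of Wang--Balle--Kasiviswanathan's subsampled RDP amplification result with the shuffle Gaussian profile $\eps^{\rm SG}_m$ from Theorem~\ref{th:main} as the base mechanism, and uses $\eps^{\rm SG}_m(\infty)=\infty$ to collapse the $\min\{2,(e^{\eps(\infty)}-1)^j\}$ factors to $2$, exactly as you note. The only difference is that you unpack the internals of that generic theorem (the mixture decomposition, binomial expansion, and moment bounds on $T_j$) where the paper cites it as a black box; your added caveat that $\eps^{\rm SG}_m$ must bound the divergence over all neighbouring $m$-record pairs, not just the canonical one, is a legitimate point the paper handles only via a w.l.o.g.\ footnote in Theorem~\ref{th:main}.
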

\begin{proof}[Proof sketch]
The proof is a direct application of Theorem \ref{eq:main} and the subsampled RDP results of \cite{wang2019subsampled}.
The full proof can be found in Appendix.
\end{proof}

\subsection{Shuffled Check-in Gaussian mechanism}
\label{subsec:sci}
In practice, it is difficult to achieve uniform subsampling with a fixed number of users via the orchestrating server \cite{balle2020privacy,girgis2021differentially}.
It is more natural to let the user decide (with her own randomness) whether to participate in training.
This is known as the shuffled check-in mechanism \cite{liew2022shuffled}.
The mechanism may also be considered as the shuffle version of Poisson subsampling in the literature.

Let $n$ be the total number of users, and $\gamma$ be the check-in rate (the probability of a user participating in training)
The shuffled check-in's RDP is given as follows \cite{liew2022shuffled}.
\begin{equation}
\label{eq:sci}
    \eps^{\rm SCI}(\lambda) \leq \frac{1}{\lambda-1} \log \left(
    \sum_{k=1}^{n}  \binom{n}{k} \gamma^{k} (1-\gamma)^{n-k} e^{(\lambda-1)\eps^{\rm SSG}_{\gamma,k}(\lambda)}
    \right).
\end{equation}
Here, $\eps^{\rm SSG}_{\gamma,k}(\lambda)$ is the subsampled shuffle Gaussian RDP defined in Equation \ref{eq:ssg}, with $k$ the number of shuffled instances, $\gamma$ the subsampling rate. 

The direct evaluation of the above equation is inefficient numerically, as it involves a sum over $n$.
We opt for a more efficient computation, which requires the use of the following conjecture.

\begin{conjecture}[Monotonicity]
\label{cj:mono}
The expression in Equation \ref{eq:main5}, that is,
$$
\frac{e^{-\lambda/2\sigma^2}}{n^\lambda}\sum_{\substack{k_1+\dotsc+k_n=\lambda;\\k_1,\dotsc,k_n\geq 0}}\binom{\lambda}{k_1,\dotsc,k_n}e^{\sum_{i=1}^nk_i^2/2\sigma^2},
$$
is a monotonically decreasing function with respect to $n$.
\end{conjecture}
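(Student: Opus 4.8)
The plan is to recognize the expression in Equation~\ref{eq:main5} as the exponential moment of a collision count and to show that this moment is Schur-convex in the underlying sampling distribution; monotonicity in $n$ then follows because the uniform law on $n$ atoms majorizes the uniform law on $n+1$ atoms. Write $f(n)$ for the quantity in Equation~\ref{eq:main5}. Since $\tfrac{1}{n^\lambda}\binom{\lambda}{k_1,\dots,k_n}$ is the probability that $\mathrm{Multinomial}(\lambda;\tfrac1n,\dots,\tfrac1n)=(k_1,\dots,k_n)$, I would realize this multinomial by dropping $\lambda$ i.i.d.\ balls $X_1,\dots,X_\lambda$ uniformly into $n$ bins with $K_i=\#\{\ell:X_\ell=i\}$, and use the elementary identity $\sum_i K_i^2=\#\{(\ell,\ell'):X_\ell=X_{\ell'}\}=\lambda+2C$, where $C=\#\{\{\ell,\ell'\}:\ell\neq\ell',\,X_\ell=X_{\ell'}\}$ is the number of colliding pairs. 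The prefactor $e^{-\lambda/2\sigma^2}$ then cancels the $e^{\lambda/2\sigma^2}$ produced by this identity, leaving $f(n)=\mathbb{E}[e^{C/\sigma^2}]$. More generally, for any probability vector $\bp=(p_1,p_2,\dots)$ I would set $\Phi(\bp):=\mathbb{E}_{X_1,\dots,X_\lambda\,\overset{\mathrm{iid}}{\sim}\,\bp}[e^{C/\sigma^2}]$, so that $f(n)=\Phi(\bu_n)$ with $\bu_n$ the uniform law on $n$ atoms.

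The crux is the Schur-convexity of $\Phi$. Setting $t=1/\sigma^2>0$ and using $e^{t\mathbbm{1}[A]}=1+(e^t-1)\mathbbm{1}[A]$, I would expand $e^{tC}=\prod_{\{i,j\}\in\binom{[\lambda]}{2}}\bigl(1+(e^t-1)\mathbbm{1}[X_i=X_j]\bigr)=\sum_{S\subseteq\binom{[\lambda]}{2}}(e^t-1)^{|S|}\prod_{\{i,j\}\in S}\mathbbm{1}[X_i=X_j]$. Reading $S$ as a graph on $[\lambda]$, the indicator product equals $1$ precisely when $X$ is constant on every connected component of $S$; taking expectations and using independence gives
$$
\Phi(\bp)=\sum_{S\subseteq\binom{[\lambda]}{2}}(e^t-1)^{|S|}\prod_{V\in\mathrm{comp}(S)}\Bigl(\sum_{k}p_k^{|V|}\Bigr),
$$
the inner product ranging over the connected components $V$ of $S$ (isolated vertices contribute the factor $\sum_k p_k=1$). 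For each integer $m\geq1$ the power sum $\bp\mapsto\sum_k p_k^m=\sum_k g(p_k)$ with $g(x)=x^m$ convex on $[0,\infty)$ is symmetric and Schur-convex, and it is non-negative. A product of non-negative Schur-convex functions is Schur-convex (if $\bp$ majorizes $\bq$, each factor is at least as large at $\bp$ as at $\bq$ and all factors are $\geq0$, so the products compare), and a non-negative linear combination of Schur-convex functions is Schur-convex; since $e^t-1>0$, the display exhibits $\Phi$ as exactly such a combination.

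To conclude, I would embed $\bu_n=(\tfrac1n,\dots,\tfrac1n,0)$ and $\bu_{n+1}=(\tfrac1{n+1},\dots,\tfrac1{n+1})$ in $\mathbb{R}^{n+1}$ and check that $\bu_n\succ\bu_{n+1}$ in the majorization order (the partial sums of the sorted vectors satisfy $k/n\geq k/(n+1)$ for $1\leq k\leq n$, with equal totals). Schur-convexity then yields $f(n)=\Phi(\bu_n)\geq\Phi(\bu_{n+1})=f(n+1)$ for all $n\geq1$, which is Conjecture~\ref{cj:mono}.

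I expect the one genuinely delicate point to be the Schur-convexity step: products of Schur-convex functions need not be Schur-convex in general, and the argument leans on the fact that every factor $\sum_k p_k^{|V|}$ is non-negative on the probability simplex (with size-one components contributing the harmless factor $1$). Everything else -- the multinomial/collision identity, the edge-subset expansion, and the majorization $\bu_n\succ\bu_{n+1}$ -- is routine. A more hands-on route would instead build an explicit ``split a uniformly chosen bin'' coupling taking $\lambda$ balls in $n$ bins to $\lambda$ balls in $n+1$ bins that never creates a collision; but closing that argument still needs a Schur-convexity input (now applied to the per-split bin distribution, which again majorizes the uniform one), so I would favour the expansion above.
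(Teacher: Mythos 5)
The paper does not prove this statement at all: it is left as Conjecture~\ref{cj:mono}, supported only by an empirical scan over $n$, and the authors explicitly list a rigorous proof as an open problem. Your argument therefore has nothing in the paper to be compared against, and on its own merits it appears to be a complete and correct proof. The chain of reductions checks out: $\tfrac{1}{n^\lambda}\binom{\lambda}{k_1,\dots,k_n}$ is exactly the uniform multinomial probability, the identity $\sum_i K_i^2=\lambda+2C$ for the ordered self-pairs versus unordered colliding pairs cancels the prefactor $e^{-\lambda/2\sigma^2}$ and leaves $f(n)=\mathbb{E}[e^{C/\sigma^2}]$, and the expansion $e^{tC}=\sum_{S\subseteq\binom{[\lambda]}{2}}(e^t-1)^{|S|}\prod_{\{i,j\}\in S}\mathbbm{1}[X_i=X_j]$ together with the i.i.d.\ structure gives the component factorization $\Phi(\bp)=\sum_S(e^t-1)^{|S|}\prod_{V}\sum_k p_k^{|V|}$ with all coefficients positive because $t=1/\sigma^2>0$. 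You correctly isolate the only delicate step: products of Schur-convex functions are not Schur-convex in general, but here every factor $\sum_k p_k^{|V|}$ is a nonnegative symmetric sum of the convex function $x\mapsto x^{|V|}$, hence nonnegative and Schur-convex, so the order-preservation survives the product and the positive linear combination. Combined with the standard majorization $(\tfrac1n,\dots,\tfrac1n,0)\succ(\tfrac1{n+1},\dots,\tfrac1{n+1})$ and the fact that appending a zero atom changes neither the sampling model nor the power sums, you get $f(n)\geq f(n+1)$. Two small remarks: for $\lambda\geq 2$ the single-edge terms give $\sum_k p_k^2=1/n$ strictly decreasing, so the monotonicity is in fact strict, which is stronger than what the conjecture as stated needs; and since Theorem~\ref{th:sci} is conditional on this conjecture, your argument would remove that hypothesis entirely, which is a genuine improvement over the paper.
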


Although we are not able to prove rigorously the above-mentioned conjecture yet, our empirical scan over $n$ shows that it is indeed monotonically decreasing (at least within our scope of investigation).
With this conjecture, we present the following theorem for calculating the shuffled check-in Gaussian RDP with efficiency.

\begin{theorem}[Shuffled Check-in Gaussian RDP]
\label{th:sci}
Let $n$ be the total number of users, and $\gamma$ be the check-in rate. Also let $\eps^{\rm SSG}_{\gamma,m}(\lambda)$ be the subsampled shuffle Gaussian RDP defined in Equation \ref{eq:ssg}, with $k$ the number of shuffled instances, $\gamma$ the subsampling rate.
If Conjecture \ref{cj:mono} holds, the shuffled check-in Gaussian mechanism satisfies $(\lambda,\eps^{\rm SCI}(\lambda))$-RDP, where
\begin{align}
\label{eq:scifast}
\eps^{\rm SCI}(\lambda) \leq \frac{1}{\lambda-1}\log\left( e^{(\lambda-1)\eps^{\rm SSG}_{\gamma,1}(\lambda)-\Delta^2n\gamma/2} + e^{(\lambda-1)\eps^{\rm SSG}_{\gamma,(1-\Delta)n\gamma+1}(\lambda)}\right).
\end{align}
Here, $\Delta \in [0,1]$ is arbitrary.
\end{theorem}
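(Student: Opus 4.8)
The plan is to bound the summation inside the logarithm of \eqref{eq:sci} by splitting the range of $k$ at the ``typical'' check-in count $(1-\Delta)n\gamma$, and controlling the two resulting pieces separately: the lower piece by a Chernoff tail bound on the check-in count, and the upper (bulk) piece by Conjecture~\ref{cj:mono}.

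First I would upgrade Conjecture~\ref{cj:mono} to the statement that, with $\gamma$ held fixed, the map $m\mapsto\eps^{\rm SSG}_{\gamma,m}(\lambda)$ is non-increasing. By Theorem~\ref{th:main}, the expression appearing in the conjecture with order $j$ equals $e^{(j-1)\eps^{\rm SG}_m(j)}$; applying the conjecture at each order $j\in\{2,\dots,\lambda\}$ therefore gives that $e^{(j-1)\eps^{\rm SG}_m(j)}$ is non-increasing in $m$. In particular $e^{\eps^{\rm SG}_m(2)}$, hence $e^{\eps^{\rm SG}_m(2)}-1$, hence $\min\{4(e^{\eps^{\rm SG}_m(2)}-1),\,2e^{\eps^{\rm SG}_m(2)}\}$ (a minimum of two functions non-increasing in $m$) are all non-increasing in $m$. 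Since the factors $\gamma^{j}\binom{\lambda}{j}$ in \eqref{eq:ssg} do not depend on $m$, every summand of the argument of the logarithm in \eqref{eq:ssg} is non-increasing in $m$, and therefore so is $\eps^{\rm SSG}_{\gamma,m}(\lambda)$.

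Next I would write $S:=\sum_{k=1}^{n}\binom{n}{k}\gamma^{k}(1-\gamma)^{n-k}e^{(\lambda-1)\eps^{\rm SSG}_{\gamma,k}(\lambda)}$ (the argument of the log in \eqref{eq:sci}), set $T:=\lfloor(1-\Delta)n\gamma\rfloor$, and split $S=\sum_{k=1}^{T}+\sum_{k=T+1}^{n}$. For the lower piece, monotonicity gives $\eps^{\rm SSG}_{\gamma,k}(\lambda)\le\eps^{\rm SSG}_{\gamma,1}(\lambda)$ for all $k\ge1$, so, with $X\sim\mathrm{Binomial}(n,\gamma)$ (mean $n\gamma$),
\[
\sum_{k=1}^{T}\binom{n}{k}\gamma^{k}(1-\gamma)^{n-k}e^{(\lambda-1)\eps^{\rm SSG}_{\gamma,k}(\lambda)}\le e^{(\lambda-1)\eps^{\rm SSG}_{\gamma,1}(\lambda)}\,\Pr[X\le(1-\Delta)n\gamma]\le e^{(\lambda-1)\eps^{\rm SSG}_{\gamma,1}(\lambda)}\,e^{-\Delta^2 n\gamma/2},
\]
where the last inequality is the multiplicative Chernoff bound for the lower tail of the binomial. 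For the bulk piece, monotonicity gives $\eps^{\rm SSG}_{\gamma,k}(\lambda)\le\eps^{\rm SSG}_{\gamma,T+1}(\lambda)$ for $k\ge T+1$, while $\sum_{k=T+1}^{n}\binom{n}{k}\gamma^{k}(1-\gamma)^{n-k}\le1$, so this piece is at most $e^{(\lambda-1)\eps^{\rm SSG}_{\gamma,T+1}(\lambda)}$; identifying the index $T+1$ with $(1-\Delta)n\gamma+1$ (this is exact when $(1-\Delta)n\gamma\in\mathbb{Z}$, and otherwise amounts to the harmless rounding $T+1=\lfloor(1-\Delta)n\gamma\rfloor+1$, since $\Delta$ is a free parameter). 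Adding the two bounds gives $S\le e^{(\lambda-1)\eps^{\rm SSG}_{\gamma,1}(\lambda)-\Delta^2 n\gamma/2}+e^{(\lambda-1)\eps^{\rm SSG}_{\gamma,(1-\Delta)n\gamma+1}(\lambda)}$, and applying $\tfrac{1}{\lambda-1}\log(\cdot)$ to \eqref{eq:sci} yields \eqref{eq:scifast}.

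The main obstacle is the first step: translating the purely numerical Conjecture~\ref{cj:mono} into monotonicity of $\eps^{\rm SSG}_{\gamma,m}(\lambda)$. Here one must be careful that in \eqref{eq:ssg} the parameters $\gamma$ and $m$ play distinct roles — only $m$ varies when $k$ ranges in \eqref{eq:sci} — and that the $\min\{\cdot,\cdot\}$ term inherits monotonicity from both of its arguments. Everything after that (the Chernoff bound on the lower binomial tail and the trivial $\le1$ bound on a partial binomial sum) is routine, the only minor nuisance being the integer rounding of $(1-\Delta)n\gamma$, which is immaterial because $\Delta$ can be chosen freely.
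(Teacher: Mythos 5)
Your proposal is correct and follows essentially the same route as the paper's proof: split the binomial sum in \eqref{eq:sci} at $(1-\Delta)n\gamma$, bound the lower piece via the multiplicative Chernoff lower-tail bound together with the monotonicity from Conjecture~\ref{cj:mono}, and bound the upper piece by its largest term times the (at most $1$) remaining binomial mass. Your added care in deriving monotonicity of $m\mapsto\eps^{\rm SSG}_{\gamma,m}(\lambda)$ from the conjecture (through each order $j$ and the $\min\{\cdot,\cdot\}$ term in \eqref{eq:ssg}) and in handling the integer rounding of $(1-\Delta)n\gamma$ fills in details the paper's proof leaves implicit.
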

\begin{proof}[Proof sketch]
The proof follows closely to the techniques used in \cite{liew2022shuffled}, where monotonicity and the Chernoff bound are utilized to prove the theorem. The full proof can be found in Appendix \ref{app:proof}.
\end{proof}
The main advantage of using Equation \ref{eq:scifast} is that the calculation of the RDP with this theorem is $O(n)$ times more efficient computationally compared to Equation \ref{eq:sci}.\section{Numerical Results}\label{sec:numerics}
In this section, we present the numerical evaluation results of the shuffle Gaussian and its extensions.
We also use the latter mechanisms to perform machine learning tasks under the distributed setting.

\subsection{Numerical results of the shuffle Gaussian}
An upper bound on the shuffle model with an $(\eps,\delta)$-LDP randomizer is given by the clones method \cite{feldman2022hiding}, which is our target of comparison.
According to \cite{feldman2022hiding}, given $n$ instances undergone $(\eps_0,\delta_0)$-LDP randomization and further shuffled such that $\eps_0 \leq \log(\frac{n}{16\log(2\delta})$ for any $\delta \in [0,1]$, the overall DP is $(\epsilon, \delta + (s^{\eps}+1))(1+e^{-\eps_0}/2)n\delta_0)$-DP.
Here, $\eps = O(\eps_0 \frac{\sqrt{e^{\eps_0}\log(1/\delta)}}{\sqrt{n}})$ when $\eps_0 > 1$.

The Gaussian mechanism is however governed by the noise variance, $\sigma$, and there is an infinite pair of corresponding $(\eps, \delta)$.
To use the clones method, we fix $\sigma$, $\delta_0=\delta$ and scan over $\delta_0$ to find parameters satisfying $\delta_{\rm final} \leq 1/n$, which is the overall DP $\delta$ parameter.
Finally, mechanism composition is performed with the strong composition theorem \cite{kairouz2015composition}.

In Table \ref{tab:clonevsours}, we compare privacy accounting using the clones method with ours (with RDP), fixing $n=60,000$, $\sigma = 9.48$ (corresponding to $(1,1/n)$-LDP) and global sensitivity 1.
We set the maximum $\lambda$ to be evaluated as 30.
While the clones method gives tighter result without composition, the RDP accounting shows tighter results by a large margin with composition.

\begin{table}
\caption{the value of shuffle $\eps$ under composition. The clones method \cite{feldman2022hiding} is compared with our RDP-based method.}
\label{tab:clonevsours}
\begin{center}
\small
\begin{tabular}{lrrrrrrr}
\hline
 No. of composition   &       1 &       2 &       3 &       4 &       5 &       6 &       7       \\
\hline
 Clones  & 0.18623 & 0.38461 & 0.59516 & 0.79355 & 1.02241 & 1.22689 & 1.43138  \\
 Ours    & 0.22820  & 0.22820  & 0.22821 & 0.22821 & 0.22821 & 0.22822 & 0.22822  \\
\hline
\end{tabular}
\end{center}
\end{table}

Let us plot the $\eps$ dependence at larger number of composition.
In Figure \ref{fig:shuff}, we show the RDP and upper bound, again demonstrating significant amplification due to shuffling.

\subsection{Distributed learning}
We perform an experiment under the shuffled check-in protocol with a setup similar to the one given in \cite{liew2022shuffled}.
We use the MNIST handwritten digit dataset \cite{lecun2010mnist}, with each user holding one data instance from it.
A convolutional neural network is used as the target model $\theta \in \mathbb{R}^d$ under the distributed learning setting.
In each round, each participating user calculates the gradient $g\in\mathbb{R}^d$ using the received model.
The user clips the gradient, $g\leftarrow g \cdot{\rm min}\left\{1, \frac{C}{|g|_2}\right\}$ with a clipping factor $C$, and add Gaussian noise to it, $\tilde{g}\to g + \mathcal{N}(0,C^2\sigma^2I_d)$ before sending it to the server.

The total number of user is 60,000, i.e., the size of the MNIST train dataset.
The check-in rate is set to be $\gamma=0.1$, $C$ 0.01, $\sigma = 5$.
The test dataset is used to evaluate the classification accuracy.

We compare our approach with two methods introduced in \cite{liew2022shuffled}.
The first method performs conservatively the privacy accounting of approximate DP of subsampling and shuffling (Baseline).
The second method converts approximate DP of (subsampled) shuffling to RDP, and perform the privacy accounting under the shuffled check-in framework (generic bound).

We note that the experiments of interest can be simulated conveniently by making simple modifications to existing libraries that implement DP-SGD (e.g., \texttt{Opacus} \cite{yousefpour2021opacus}).
This is done by noticing that in DP-SGD, given a batch of clipped gradients $\tilde{g}$, Gaussian noise is applied to the sum of the gradients, i.e., 
$$\sum_{i=1}^{B}\tilde{g}_i \leftarrow \sum_{i=1}^{B}\tilde{g}_i + \mathcal{N}(0,\sigma^2)$$
Here, $B$ is the number of samples.
In shuffle Gaussian or local Gaussian, the server sums over the received gradients randomized with Gaussian noises,
$$\sum_{i=1}^{B}\left\{\tilde{g}_i + \mathcal{N}(0, \sigma^2) \right\} = \sum_{i=1}^{B}\tilde{g}_i + \mathcal{N}(0,B\sigma^2)$$
Therefore, we can simply multiply $\sigma$ by a factor of $\sqrt{B}$ in existing DP-SGD libraries to simulate the decentralized setting of shuffle/local Gaussian.

Figure \ref{fig:feddpsgd} shows the accuracy-versus-$\eps$ plot of our experiment.
Here, a total number of 5,540 rounds of training has been run.
Our accounting requires less $\eps$ to reach optimal accuracy ($80\%$).
Note also that the generic accounting method in \cite{liew2022shuffled} is rather sophisticated (further requiring an inner round of optimization) and requires $O(n)$ times more computation compared to ours.

\begin{figure}
\begin{subfigure}{0.5\textwidth}
        \centering 
    \includegraphics[width=0.8\textwidth]{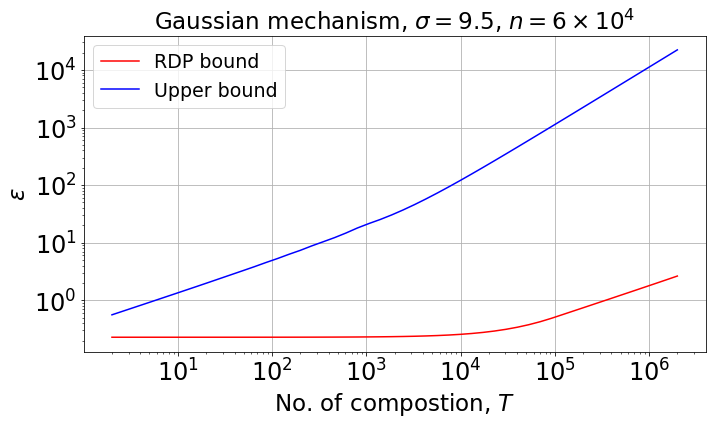}
      \caption{Shuffling effects}
      \label{fig:shuff}
    \end{subfigure}\hfil 
    \begin{subfigure}{0.5\textwidth}
        \centering 
    \includegraphics[width=0.8\textwidth]{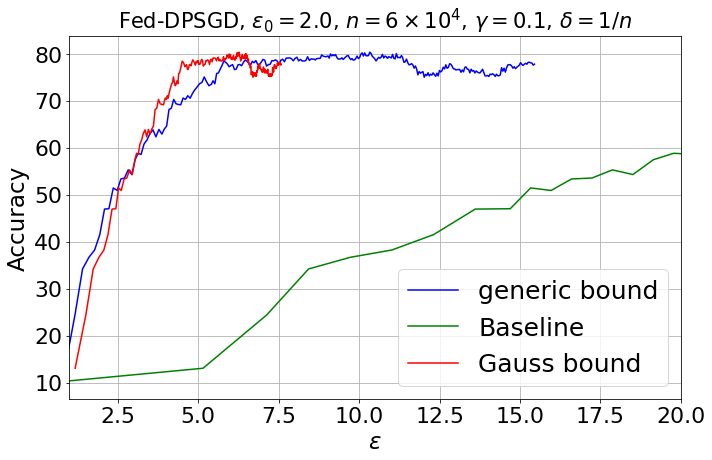}
      \caption{Distributed learning}
      \label{fig:feddpsgd}
    \end{subfigure} 
  \caption{The left figure shows the upper bound (no shuffling) and RDP bound (in red) derived for the shuffle Gaussian mechanism varying the number of composition.
  The right figure shows the accuracy versus $\epsilon$ under the distributed learning setting.
  ``Gauss bound" (in red) is the bound derived in this work.}
      \label{fig:result}
\end{figure}

\section{Conclusion}
In this paper, we have analyzed the RDP of shuffle Gaussian mechanism, and applied it to distributed learning.
There are still some open questions.
Calculating the RDP at large $\lambda$ is computationally infeasible.
It has deep relation with number theory and looking for hints from there to calculate these values is a possible future direction.
A rigorous proof of Conjecture \ref{cj:mono} is also needed.

As various settings of the Gaussian mechanism under the centralized learning setting, i.e., DP-SGD, are relatively well studied,
we hope that our privacy accountant of shuffle Gaussian opens an avenue that  inspires more in-depth studies on various datasets and settings of distributed learning under the shuffle model, using the relatively simple Gaussian mechanism, similar to DP-SGD.

\bibliographystyle{plain}
\bibliography{ref}

\newpage
\appendix
\noindent
{\Large {\textbf{Supplementary Material}}}
\\
\section{Additional Proofs}
\label{app:proof}
\subsection{Proof of Theorem \ref{th:subshuff}}
\begin{proof}
Theorem 9 of \cite{wang2019subsampled} states that for subsampling rate $\gamma$, a mechanism $\mathcal{M}$ satisfying $(\lambda,\eps(\lambda))$-RDP applied to a subsampled set of data, satisfy $(\lambda,\eps'(\lambda))$, where
\begin{align*}
\eps'(\lambda) 
&\leq \frac{1}{\lambda -1}
\log [1 + \gamma^2 \binom{\lambda}{2} \min\left\{4(e^{\eps_m(2)}-1),  e^{\eps_m(2)}\min\{2,(e^{\eps(\infty)-1})^2\}\right\} \\
&+\sum_{j=3}^{\lambda}\gamma^j \binom{\lambda}{j} e^{(j-1)\eps_m(j)} \min \{2, (e^{\eps(\infty)-1})^j\}
]
\end{align*}
Since $\eps(\infty)$ is unbounded, we can simplify it to the expression given in Theorem \ref{th:subshuff} in a straightforward way.
\end{proof}

\subsection{Proof of Theorem \ref{th:sci}}
\begin{proof}

The Chernoff bound states that 
$\mathbb{P}[X \leq (1-\Delta)\mu] \leq e^{-\Delta^2\mu/2}$ for all $0 < \Delta < 1$.

We split the summation over $k$ of
$$
 \sum_{k=1}^{n}  \binom{n}{k} \gamma^{k} (1-\gamma)^{n-k} e^{(\lambda-1)\eps^{SSG}_{\gamma,k}}
$$
into two parts, those equal or less than $(1-\Delta)n\gamma$, and those equal or larger than $(1-\Delta)n\gamma+1$, assuming that $(1-\Delta)n\gamma$ is an integer.
As Conjecture \ref{cj:mono} tells us that the expression in Equation \ref{eq:main4} is monotonically decreasing, the largest value of $\eps^{\rm SSG}_{\gamma,k}(\lambda)$ for $k \in [1,(1-\Delta)n\gamma]$ is $\eps^{\rm SSG}_{\gamma,1}(\lambda)$.
The first part of the summation can then be bounded by
$$
e^{-\Delta^2n\gamma/2}\eps^{\rm SSG}_{\gamma,1}(\lambda),
$$
using the Chernoff bound.

Similarly, the largest value of $\eps^{\rm SSG}_{\gamma,k}(\lambda)$ for $k \in [(1-\Delta)n\gamma]+1, n$ is $\eps^{\rm SSG}_{\gamma,(1-\Delta)n\gamma]+1}(\lambda)$.
The second part of the summation is then bounded by 
$$
\eps^{\rm SSG}_{\gamma,(1-\Delta)n\gamma+1}(\lambda)
$$
Combining the above two terms gives the desired theorem.
\end{proof}
\section{Algorithms}
\label{app:algo}
\subsection{$\texttt{GetUniqueCount}$: Note on the permutation invariance of Equation \ref{eq:main5}}
It is best to describe the counting factor in our numerical evaluation of Equation \ref{eq:main5} via an example.

Consider again $(x_1+x_2+x_3)^3$. 
Expanding the multinomials, the terms with exponents of the form $x_i^2x_j$ are
$$
3(x_1^2x_2+x_1^2x_3+x_2^2x_1+x_2^2x_3+x_3^2x_1+x_3^2x_2).
$$
Here, the factor 3 comes from the multinomial coefficient $\binom{3}{2,1,0}$.
These terms $x_1^2x_2,x_1^2x_3,x_2^2x_1,x_2^2x_3,x_3^2x_1,x_3^2x_2$ belong to the same subset of the form $x_i^2x_j$, which do not have repetition of exponent with same degrees.
It has $3!= 6$ elements in total.
This subset contributes effectively a factor of $3\times 6=18$ to the expansion in Equation \ref{eq:main5},
$$
18e^{(2^2+1^2)/2\sigma^2},
$$
ignoring factor unrelated to the multinomial coefficients.

Consider the same expansion, but the terms with exponents of the form $x_ix_jx_k$:
$$
6x_1x_2x_3.
$$
Here, the multinomial coefficient is $\binom{3}{1,1,1}=6$.
Since the number of repetition of exponent with the same degree is 3 ($x_i,x_j,x_k$ has the same degree 1), the subset has $3!/(3!)=1$ element.
This subset contributes effectively a factor of $6\times 1=6$ to the expansion in Equation \ref{eq:main5},
$$
6e^{(1^1+1^2)/2\sigma^2},
$$
ignoring factor unrelated to the multinomial coefficients.

This procedure of calculating the contributing coefficients is called $\texttt{GetUniqueCount}$ in Algorithm \ref{alg:num}.

\subsection{Algorithm \ref{alg:num} description}
We describe our algorithm for evaluating Equation \ref{eq:main}.

Given $\lambda$, we find all partition of integers satisfying $k_1,\dotsc,k_n=\lambda$.
We denote the operation by $\texttt{GetPartition}$.
For each of the partition, we obtain the number of counts for each unique $k_i$; let them be $\kappa_1,\dotsc,\kappa_i$ ($i\leq n$), and denote the operation by $\texttt{GetUniqueCount}$.
Subsequently, we calculate the value:
$$
\frac{e^{-\lambda/2\sigma^2}}{n^\lambda}\binom{\lambda}{k_1\dotsc k_n}\frac{n!}{\kappa_1!\dotsc\kappa_i!} e^{\sum_{j=1}^{n}k_j^2},
$$
and make summation over all the partitions.
The algorithm is shown in Algorithm \ref{alg:num}.

}
\end{document}